\documentclass[english,aps,prd,reprint]{revtex4}
\usepackage[T1]{fontenc}
\usepackage[latin9]{inputenc}
\usepackage[a4paper]{geometry}
\geometry{verbose,tmargin=2.5cm,bmargin=2.5cm,lmargin=2.5cm,rmargin=2.5cm}
\setcounter{secnumdepth}{3}
\usepackage{babel}
\usepackage{wrapfig}
\usepackage{amsmath}
\usepackage{amsthm}
\usepackage{amssymb}
\usepackage{graphicx}
\usepackage[unicode=true,pdfusetitle,
 bookmarks=true,bookmarksnumbered=false,bookmarksopen=false,
 breaklinks=false,pdfborder={0 0 1},backref=false,colorlinks=false]
 {hyperref}
\usepackage{breakurl}

\makeatletter
\@ifundefined{textcolor}{}
{%
 \definecolor{BLACK}{gray}{0}
 \definecolor{WHITE}{gray}{1}
 \definecolor{RED}{rgb}{1,0,0}
 \definecolor{GREEN}{rgb}{0,1,0}
 \definecolor{BLUE}{rgb}{0,0,1}
 \definecolor{CYAN}{cmyk}{1,0,0,0}
 \definecolor{MAGENTA}{cmyk}{0,1,0,0}
 \definecolor{YELLOW}{cmyk}{0,0,1,0}
}
  \theoremstyle{definition}
  \newtheorem{defn}{\protect\definitionname}
  \theoremstyle{plain}
  \newtheorem{prop}{\protect\propositionname}
  \theoremstyle{plain}
  \newtheorem*{thm*}{\protect\theoremname}

\makeatother

  \providecommand{\definitionname}{Definition}
  \providecommand{\propositionname}{Proposition}
  \providecommand{\theoremname}{Theorem}

\begin{document}

\title{Singularities and Conjugate Points in FLRW Spacetimes}

\author{Huibert het Lam}
\email[E-mail: ]{h.hetlam@uu.nl}
\author{Tomislav Prokopec}
\email[E-mail: ]{t.prokopec@uu.nl}

\address{Institute for Theoretical Physics, Spinoza Institute and $EMME\Phi$,
Utrecht University,}

\address{Postbus 80.195, 3508 TD Utrecht, The Netherlands}
\begin{abstract}
Conjugate points play an important role in the proofs of the singularity
theorems of Hawking and Penrose. We examine the relation between singularities
and conjugate points in FLRW spacetimes with a singularity. In particular
we prove a theorem that when a non-comoving, non-spacelike geodesic
in a singular FLRW spacetime obeys conditions (\ref{eq:condition conjugate point at singularity})
and (\ref{eq:condition conjugate point at singularity 2}), every
point on that geodesic is part of a pair of conjugate points. The
proof is based on the Raychaudhuri equation. We find that the theorem
is applicable to all non-comoving, non-spacelike geodesics in FLRW
spacetimes with non-negative spatial curvature and scale factors that
near the singularity have power law behavior or power law behavior
times a logarithm. When the spatial curvature is negative, the theorem
is applicable to a subset of these spacetimes.
\end{abstract}
\maketitle

\section{Introduction}

Hawking and Penrose proved that under very general physical conditions
a spacetime has a singularity \citep{Hawking:1969sw,hawking1973large}.
A singularity is defined as a non-spacelike geodesic that is incomplete.
One uses this definition because test particles move on these trajectories
and thus have only traveled for a finite proper time. The rough idea
of the proof of these singularity theorems is that one assumes that
all non-spacelike geodesics are complete, one has a trapped surface
(e.g. an event horizon of a black hole) and that the weak energy condition
is obeyed. Under these conditions it is shown that there must be conjugate
points (which one can see as geodesics that are converging at two
points) and that leads to a contradiction with the completeness of
geodesics. Because of this relation between a singularity and conjugate
points, it is sometimes said that having a singularity is equivalent
to having conjugate points. This of course does not follow in any
way from the theorems of Hawking and Penrose, but one can examine
this idea in simple spacetimes. More specifically, we will do this
in FLRW spacetimes, which describe isotropic and spatially homogeneous
universes and are used as a model in cosmology. We prove that in a
singular FLRW spacetime, every point on a non-comoving, non-spacelike
geodesic satisfying conditions (\ref{eq:condition conjugate point at singularity})
and (\ref{eq:condition conjugate point at singularity 2}) is conjugate
to another point of that geodesic. We will examine the conditions
of the theorem for power law and logarithmic behavior of the scale
factor. This includes in particular an FLRW spacetime with flat spatial
three-surfaces that contains either a perfect homogeneous radiation
fluid or a perfect homogeneous matter fluid. We show that the theorem
is applicable to spacetimes with these scale factors when they have
non-negative spatial curvature. For negative spatial curvature, the
theorem is applicable to a subset of these scale factors. 

In this paper we first give a brief recap of the theory used to study
conjugate points, in particular the Raychaudhuri equation. We then
use this equation to prove the theorem after which we show that the
conditions of the theorem are obeyed for certain physical FLRW spacetimes
with non-negative spatial curvature and for a subset of the physical
spacetimes with negative spatial curvature. We adopt units in which
the velocity of light $c=1.$

\section{Conjugate Points and the Raychaudhuri Equation}

In this section we will recall the theory that is needed to study
conjugate points. We will state two propositions without proofs, one
can find these in e.g. \citep{hawking1973large,beem1996global}. Everything
will be stated for a general spacetime $(M,g),$ where $g$ is a Lorentzian
metric. The Riemann curvature tensor $R$ is defined by

\begin{equation}
R(X,Y)Z=\nabla_{X}(\nabla_{Y}Z)-\nabla_{Y}(\nabla_{X}Z)-\nabla_{[X,Y]}Z,
\end{equation}
where $\nabla$ is the Levi-Civita connection. Contracting the curvature
tensor, one can define the Ricci tensor $\mathrm{Ric}$ as the tensor
with components
\begin{equation}
R_{\lambda\nu}=R_{\;\;\lambda\rho\nu}^{\rho}.
\end{equation}
Covariant differentiation along a curve $\gamma(\tau)$ will be denoted
by $D_{\tau}$.
\begin{defn}
Let $\gamma:\:[\tau_{\mathrm{i}},\tau_{\mathrm{f}}]\rightarrow M$
be a non-spacelike geodesic segment. A \textit{variation through geodesics}
of $\gamma$ is a smooth function $\Gamma:\;(-\epsilon,\epsilon)\times[\tau_{\mathrm{i}},\tau_{\mathrm{f}}]\rightarrow M$,
such that $\Gamma(0,\tau)=\gamma(\tau)$ for all $\tau\in[\tau_{\mathrm{i}},\tau_{\mathrm{f}}]$
and every curve $\Gamma(w_{0},\tau)$ is a geodesic segment. The \textit{variation
field }of $\Gamma$ is the vector field $J(\tau)=\partial_{w}\Gamma(w,\tau)|_{w=0}$
along $\gamma$. 
\end{defn}
$ $
\begin{defn}
A vector field $J$ along a geodesic segment $\gamma:\;[\tau_{\mathrm{i}},\tau_{\mathrm{f}}]\rightarrow M$
that satisfies the Jacobi equation,
\begin{equation}
D_{\tau}^{2}J+R(J,\dot{\gamma})\dot{\gamma}=0,\label{eq:Jacobi equation}
\end{equation}
where $\dot{\gamma}=D_{\tau}\gamma,$ will be called a \textit{Jacobi
field}.
\end{defn}
\begin{prop}
Variations through geodesics $\Gamma$ of a geodesic segment $\gamma$
have a Jacobi field as variation field and every Jacobi field along
$\gamma$ corresponds to a variation through geodesics. 
\end{prop}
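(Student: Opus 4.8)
The plan is to prove the two implications separately, in both cases working with the two coordinate vector fields $T=\partial_{\tau}\Gamma$ and $S=\partial_{w}\Gamma$ along the map $\Gamma$, together with the associated covariant derivatives $D_{\tau}$ and $D_{w}$. Two standard facts about these objects will do almost all of the work, so I would establish (or simply quote) them first. The first is the symmetry lemma $D_{\tau}S=D_{w}T$, which holds because $\nabla$ is torsion-free and the coordinate fields $\partial_{\tau},\partial_{w}$ on the parameter rectangle commute. The second is the curvature commutation identity $D_{\tau}D_{w}V-D_{w}D_{\tau}V=R(T,S)V$, valid for any vector field $V$ along $\Gamma$; here the bracket term in the definition of $R$ drops out precisely because $[\partial_{\tau},\partial_{w}]=0$.

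For the forward direction, let $\Gamma$ be a variation through geodesics with variation field $J(\tau)=S|_{w=0}$. The geodesic hypothesis is $D_{\tau}T=0$ for every fixed $w$. Applying the commutation identity with $V=T$ and using $D_{\tau}T=0$ gives $D_{\tau}D_{w}T=R(T,S)T$, and substituting the symmetry lemma $D_{w}T=D_{\tau}S$ on the left yields $D_{\tau}^{2}S=R(T,S)T$. Restricting to $w=0$, where $S=J$ and $T=\dot{\gamma}$, and using the antisymmetry $R(\dot{\gamma},J)=-R(J,\dot{\gamma})$ in the first two slots, I obtain $D_{\tau}^{2}J=-R(J,\dot{\gamma})\dot{\gamma}$, which is exactly the Jacobi equation (\ref{eq:Jacobi equation}). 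The only subtlety is matching the sign convention fixed by the definition of $R$ given in the text.

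For the converse, given a Jacobi field $J$ along $\gamma$ I would construct a variation realizing it from its initial data at $\tau_{\mathrm{i}}$. Choose any curve $c(w)$ with $c(0)=\gamma(\tau_{\mathrm{i}})$ and $\dot{c}(0)=J(\tau_{\mathrm{i}})$, and a vector field $W(w)$ along $c$ with $W(0)=\dot{\gamma}(\tau_{\mathrm{i}})$ and $D_{w}W|_{w=0}=D_{\tau}J(\tau_{\mathrm{i}})$, then set $\Gamma(w,\tau)=\exp_{c(w)}\bigl((\tau-\tau_{\mathrm{i}})W(w)\bigr)$. By construction each $\tau$-curve is a geodesic and $\Gamma(0,\tau)=\gamma(\tau)$, so $\Gamma$ is a variation through geodesics, and by the forward direction its variation field $\tilde{J}$ is a Jacobi field. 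A direct evaluation at $\tau_{\mathrm{i}}$ gives $\tilde{J}(\tau_{\mathrm{i}})=\dot{c}(0)=J(\tau_{\mathrm{i}})$, while using $D_{\tau}\tilde{J}=D_{\tau}S=D_{w}T$ together with $T(w,\tau_{\mathrm{i}})=W(w)$ gives $D_{\tau}\tilde{J}(\tau_{\mathrm{i}})=D_{w}W|_{w=0}=D_{\tau}J(\tau_{\mathrm{i}})$. Since the Jacobi equation is a linear second-order ODE, uniqueness for its initial value problem forces $\tilde{J}=J$ on all of $[\tau_{\mathrm{i}},\tau_{\mathrm{f}}]$.

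The main obstacle is not any single difficult estimate but careful bookkeeping: fixing the signs in the symmetry lemma and the curvature commutator so that they are consistent with the paper's conventions, and, in the converse, checking that the map $\Gamma$ built from the exponential is smooth and defined on a genuine rectangle $(-\epsilon,\epsilon)\times[\tau_{\mathrm{i}},\tau_{\mathrm{f}}]$. The latter requires shrinking $\epsilon$ so that the family of geodesics stays inside the domain where $\exp$ is defined. Once the two identities are in hand, both implications follow in a few lines.
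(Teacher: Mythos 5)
Your argument is correct and is exactly the standard proof of this proposition; the paper itself states it without proof and defers to the cited references (Hawking--Ellis, Beem--Ehrlich--Easley), where the same two ingredients --- the symmetry lemma $D_{\tau}S=D_{w}T$ and the curvature commutator for vector fields along a two-parameter map --- together with the exponential-map construction and ODE uniqueness constitute the argument. The signs you obtain are consistent with the paper's convention for $R$, so nothing further is needed.
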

From Eq. (\ref{eq:Jacobi equation}) it then follows that the Jacobi
fields along a geodesic segment are determined by initial conditions
$J(\tau_{0})$ and $D_{\tau}J(\tau_{0})$ at $\gamma(\tau_{0})$ and
thus form an eight dimensional subspace of the space of vector fields
along $\gamma.$ 

\subsection{Timelike Geodesic Segment}

We now first restrict to timelike geodesic segments $\gamma$. 
\begin{defn}
If $\gamma$ is a timelike geodesic segment joining $p,q\in\gamma$,
$p$ is said to be \textit{conjugate} to $q$ along $\gamma$ if there
exists a non-vanishing Jacobi field $J$ along $\gamma$ such that
$J$ is zero at $p$ and $q$.
\end{defn}
\begin{wrapfigure}{r}{0.4\columnwidth}%
\begin{centering}
\includegraphics[scale=0.65]{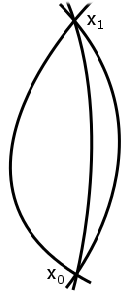}
\par\end{centering}
\caption{\label{fig:conjugate points}Let the lines be geodesics of some two-dimensional
manifold. Then $x_{0}$ and $x_{1}$ are conjugate to each other.}

\end{wrapfigure}%
 In figure \ref{fig:conjugate points} one can find an illustration
to give some idea of conjugate points. Jacobi fields corresponding
to conjugate points necessarily live in $N(\gamma),$ the space of
vector fields along $\gamma$ that are orthogonal to $\dot{\gamma}.$
In the theorem that we prove in section \ref{subsec:The-Theorem},
our goal is to show that there exists a conjugate point to a certain
fixed point on a geodesic. To do this we only have to look at the
Jacobi fields that are orthogonal to $\dot{\gamma}$ and vanish at
this initial point. This means that we can restrict to a three dimensional
vector space of Jacobi fields. To describe all of these fields at
once, we will introduce Jacobi tensors. We first simplify notation
by defining 
\begin{eqnarray}
R_{\gamma}(v) & = & R\left(v,\dot{\gamma}(\tau)\right)\dot{\gamma}(\tau).
\end{eqnarray}
Let $A:\;N(\gamma)\rightarrow N(\gamma)$ be a smooth tensor field.
Since $g\left(R_{\gamma}\left(v\right),\dot{\gamma}(\tau)\right)=0$
we can define a map $R_{\gamma}A:\:N(\gamma(\tau))\rightarrow N(\gamma(\tau))$
by
\begin{eqnarray}
R_{\gamma}A(\tau)(v) & = & R_{\gamma}\left(A(\tau)(v)\right).
\end{eqnarray}

\begin{defn}
A smooth $(1,1)$ tensor field $A:\;N(\gamma)\rightarrow N(\gamma)$
is called a \textit{Jacobi tensor} \textit{field} if it satisfies
\begin{eqnarray}
D_{\tau}^{2}A+R_{\gamma}A & = & 0,\label{eq:Jacobi Tensor field 1}\\
\mathrm{Ker}(A(\tau))\cap\mathrm{Ker}(D_{\tau}A(\tau)) & = & \{0\}\label{eq:Jacobi tensor field 2}
\end{eqnarray}
for all $\tau\in[\tau_{\mathrm{i}},\tau_{\mathrm{f}}]$. Here $\mathrm{Ker}(A(\tau))$
is the kernel of $A(\tau)$.
\end{defn}
If $V\in N(\gamma)\backslash\{0\}$ is a parallel transported vector
field along $\gamma$, i.e. $D_{\tau}V=0,$ and $A(\tau)$ a Jacobi
tensor field, define $J(\tau)=A(\tau)V(\tau)$. Then $J(\tau)$ is
a Jacobi field. Condition (\ref{eq:Jacobi tensor field 2}) guarantees
that $J$ is non-trivial. Therefore $A$ can be seen as describing
different families of geodesics at the same time. We now define a
Jacobi tensor field that describes all solutions to Eq. (\ref{eq:Jacobi equation})
living in $N(\gamma)$ and that vanish at $\gamma(\tau_{\mathrm{i}})$: 
\begin{defn}
Let $\{E_{\mu}\}$, $\mu=0,1,2,3$ be a parallel transported orthonormal
frame along $\gamma$ such that $E_{0}=\dot{\gamma}$. Let $J_{i}(\tau),$
$i\in\{1,2,3\},$ be the Jacobi field with $J_{i}(\tau_{\mathrm{i}})=0$
and $D_{\tau}J_{i}(\tau_{\mathrm{i}})=E_{i}(\tau_{\mathrm{i}})$.
Let $A$ be the tensor such that the components in the basis $E_{\mu}$
are given by 
\begin{eqnarray}
A_{\;\;l}^{k}(\tau) & = & (J_{l}(\tau))^{k};\label{eq:jacobi tensor related to point}\\
A_{\;\;0}^{0}=A_{\;\;0}^{k}=A_{\;\;l}^{0} & = & 0,\nonumber 
\end{eqnarray}
for $k,l=1,2,3$. 
\end{defn}
If $A$ is singular for some $\tau$ this will correspond to a Jacobi
field that vanishes at that $\gamma(\tau),$ hence that point on the
geodesic is conjugate to $\gamma(\tau_{\mathrm{i}}).$ So points conjugate
to $\gamma(\tau_{\mathrm{i}})$ are the points where $\det A=0.$
To examine whether $A$ is singular at some point, we develop some
more machinery.
\begin{defn}
Let $B_{A}=\left(D_{\tau}A\right)A^{-1}$ at points where $\det A\neq0$
\end{defn}
\begin{enumerate}
\item The \textit{expansion} $\theta_{A}$ is 
\begin{equation}
\theta_{A}=\mathrm{tr}(B_{A}).
\end{equation}
\item The \textit{vorticity tensor} $\omega_{A}$ is 
\begin{equation}
\omega_{A}=\frac{1}{2}(B_{A}-B_{A}^{\dagger}).
\end{equation}
\item The \textit{shear tensor} $\sigma_{A}$ is 
\begin{equation}
\sigma_{A}=\frac{1}{2}(B_{A}+B_{A}^{\dagger})-\frac{\theta_{A}}{3}I,
\end{equation}
where $I$ is the identity matrix. 
\end{enumerate}
Notice that 
\begin{equation}
B_{A}=\omega_{A}+\sigma_{A}+\frac{\theta_{A}}{3}I.\label{eq:expansion B}
\end{equation}

\begin{prop}
\textup{The vorticity 
\begin{equation}
\omega_{A}=0,
\end{equation}
the expansion
\begin{equation}
\theta_{A}=\partial_{\tau}\log\left(\det A\right)\label{eq:expansion in A}
\end{equation}
and the derivative of $\theta_{A}$ is given by
\begin{equation}
\dot{\theta}_{A}=-\mathrm{Ric}(\dot{\gamma}(\tau),\dot{\gamma}(\tau))-\mathrm{tr}(\sigma_{A}^{2})-\frac{\theta_{A}^{2}}{3}.\label{eq:generalized Raychaudhuri eq timelike}
\end{equation}
Eq. (\ref{eq:generalized Raychaudhuri eq timelike}) is also called
the Raychaudhuri equation for timelike geodesics.}
\end{prop}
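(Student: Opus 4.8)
The plan is to treat the three claims in turn, with the vorticity identity supplying the key input for the other two. The expansion formula is the most direct: by Jacobi's formula for the derivative of a determinant, $\partial_{\tau}\det A = (\det A)\,\mathrm{tr}(A^{-1}D_{\tau}A)$, so that $\partial_{\tau}\log(\det A) = \mathrm{tr}(A^{-1}D_{\tau}A)$, and the cyclic invariance of the trace rewrites this as $\mathrm{tr}((D_{\tau}A)A^{-1}) = \mathrm{tr}(B_{A}) = \theta_{A}$. This step is essentially bookkeeping and I expect no difficulty there.

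The vorticity result is where the real content lies, and I expect it to be the main obstacle, because a generic Jacobi tensor does not have vanishing vorticity---the conclusion hinges on the specific initial condition $A(\tau_{\mathrm{i}})=0$ built into the definition of $A$. My approach is to form the Wronskian-type quantity $W = A^{\dagger}(D_{\tau}A) - (D_{\tau}A)^{\dagger}A$ and to show it is parallel along $\gamma$. Differentiating and substituting the tensor Jacobi equation $D_{\tau}^{2}A = -R_{\gamma}A$ gives $D_{\tau}W = (R_{\gamma}A)^{\dagger}A - A^{\dagger}R_{\gamma}A$, so that $W$ is constant provided $R_{\gamma}$ is self-adjoint with respect to $g$ on $N(\gamma)$. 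This self-adjointness I would verify directly from the pair symmetry and the two antisymmetries of the Riemann tensor, which together yield $g(R_{\gamma}(v),w)=g(v,R_{\gamma}(w))$ for $v,w$ orthogonal to $\dot{\gamma}$. Since $A(\tau_{\mathrm{i}})=0$, the conserved quantity $W$ vanishes identically, i.e. $A^{\dagger}(D_{\tau}A) = (D_{\tau}A)^{\dagger}A$. Writing this as $(D_{\tau}A)^{\dagger} = A^{\dagger}B_{A}$ and inserting it into $B_{A}^{\dagger} = (A^{\dagger})^{-1}(D_{\tau}A)^{\dagger}$ gives $B_{A}^{\dagger}=B_{A}$, hence $\omega_{A}=\tfrac{1}{2}(B_{A}-B_{A}^{\dagger})=0$.

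Finally, the Raychaudhuri equation follows by differentiating $B_{A}=(D_{\tau}A)A^{-1}$ along $\gamma$. Using $D_{\tau}(A^{-1})=-A^{-1}(D_{\tau}A)A^{-1}$ together with the tensor Jacobi equation produces the matrix Riccati relation $D_{\tau}B_{A}=-R_{\gamma}-B_{A}^{2}$. Taking the trace gives $\dot{\theta}_{A}=-\mathrm{tr}(R_{\gamma})-\mathrm{tr}(B_{A}^{2})$. I would identify $\mathrm{tr}(R_{\gamma})$ with $\mathrm{Ric}(\dot{\gamma},\dot{\gamma})$ by evaluating the trace in the parallel orthonormal frame $\{E_{\mu}\}$: the $E_{0}=\dot{\gamma}$ contribution drops because $R(\dot{\gamma},\dot{\gamma})\dot{\gamma}=0$, and the remaining spatial sum reproduces the Ricci contraction. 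For the last term I would invoke the already-established $\omega_{A}=0$ to decompose $B_{A}=\sigma_{A}+\tfrac{\theta_{A}}{3}I$; since $\sigma_{A}$ is traceless, $\mathrm{tr}(B_{A}^{2})=\mathrm{tr}(\sigma_{A}^{2})+\tfrac{\theta_{A}^{2}}{3}$, which assembles into Eq. (\ref{eq:generalized Raychaudhuri eq timelike}). The only delicate point here is matching the sign and index conventions of the Riemann and Ricci tensors fixed at the start of the section, but the structure of the computation is otherwise mechanical.
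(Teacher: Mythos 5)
Your proof is correct. The paper states this proposition without proof, deferring to the cited references; your argument --- Jacobi's formula for the expansion, self-adjointness of $R_{\gamma}$ plus conservation of the Wronskian $A^{\dagger}D_{\tau}A-(D_{\tau}A)^{\dagger}A$ with the initial condition $A(\tau_{\mathrm{i}})=0$ for the vanishing vorticity, and the trace of the matrix Riccati equation $D_{\tau}B_{A}=-R_{\gamma}-B_{A}^{2}$ for the Raychaudhuri equation --- is precisely the standard one found in those references, with all the genuinely delicate points (the role of the initial condition, the dropping of the $E_{0}$ direction in $\mathrm{tr}(R_{\gamma})$) correctly identified.
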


\subsection{Null Geodesic Segment}

For null geodesic segments $\gamma:\;[\tau_{\mathrm{i}},\tau_{\mathrm{f}}]\rightarrow M$
one can also define conjugate points using Jacobi fields. However
now $\dot{\gamma}\in N(\gamma)$, so since we will be interested in
the convergence of geodesics, it makes more sense to look at the projection
of Jacobi fields to a quotient space formed by identifying vectors
that differ by a multiple of $\dot{\gamma}.$ This idea is implicitly
used in \citep{hawking1973large} and further developed in \citep{Bolts}.
One can do a similar analysis as for the timelike case and define
Jacobi classes $\bar{J}$, Jacobi tensor fields $\bar{A}$, vorticity
$\bar{\omega}_{\bar{A}},$ shear $\bar{\sigma}_{\bar{A}}$ and expansion
$\bar{\theta}_{\bar{A}}$ for this quotient space. One can derive
that 
\begin{equation}
\bar{\theta}_{\bar{A}}=\partial_{\tau}\log\left(\det\bar{A}\right)\label{eq:expansion in A null}
\end{equation}
 and derive a Raychaudhuri equation which is given by 
\begin{equation}
\partial_{\tau}\bar{\theta}_{\bar{A}}=-\mathrm{Ric}(\dot{\gamma},\dot{\gamma})-\mathrm{tr}(\bar{\sigma}_{\bar{A}}^{2})-\frac{\bar{\theta}_{\bar{A}}^{2}}{2}.\label{eq:raychaudhuri eq null geodesic}
\end{equation}
Points conjugate to $\gamma(\tau_{\mathrm{i}})$ correspond to points
where $\det\bar{A}=0,$ with $\bar{A}$ a specific Jacobi tensor field
constructed for a null geodesic as $A$ in (\ref{eq:jacobi tensor related to point})
for a timelike geodesic.

\section{FLRW Spacetimes\label{sec:Relation-between-Conjugate}}

We would like to study the relation between conjugate points and a
singularity in a spacetime with an FLRW metric. This metric describes
a spatially homogeneous, isotropic spacetime and in spherical coordinates
it is given by: 
\begin{equation}
ds^{2}=-dt^{2}+a(t)^{2}\left[\frac{dr^{2}}{1-\kappa r^{2}}+r^{2}\left(d\theta^{2}+\sin^{2}(\theta)d\varphi^{2}\right)\right],\label{eq:flrw metric}
\end{equation}
where $\kappa$ is the curvature of spacelike three-surfaces and the
scale factor $a(t)$ is normalized such that $a(t_{1})=1$ for some
time $t_{1}$. This metric is a good description of our universe,
since from experiments as WMAP and Planck, it follows that our universe
is spatially homogeneous and isotropic when averaged over large scales.
Geodesics $\gamma(\tau)$, where $\tau$ is an affine parameter, satisfy
\begin{equation}
\frac{d\gamma^{0}}{d\tau}=\frac{\sqrt{C-\epsilon_{\mathrm{n}}a^{2}}}{a},\label{eq:time component geodesic}
\end{equation}
where $C=|\vec{V}(t_{1})|^{2}=g_{ij}\dot{\gamma}^{i}\dot{\gamma}^{j}(t_{1})$
and $\epsilon_{\mathrm{n}}$ is the normalization of the geodesic:
$\epsilon_{\mathrm{n}}=0$ for null geodesics and $\epsilon_{\mathrm{n}}=-1$
for timelike geodesics. As argued in \citep{Lam:2016kmt}, singularities
(which are in general defined as incomplete non-spacelike geodesics)
in this spacetime are the points where the scale factor $a$ vanishes.
We would like to prove that under certain conditions a singularity
implies that all points on a geodesic are part of a pair of conjugate
points. 

\subsection{Examining the Definition of Conjugate Points}

Since the metric (\ref{eq:flrw metric}) becomes degenerate at a singularity
we can try to generalize the definition of conjugate points to also
include these points. For the theorem this will be important because
we will only be able to show that a certain point $\gamma(t)$ is
conjugate to a point $\gamma(t')$ where $t_{0}\leq t'<t$ and $t_{0}$
corresponds to the singularity $a(t_{0})=0$. Let us examine a specific
model. We will study the FLRW metric with $\kappa=0$ and $a(t)=\sqrt{t}.$
This models a spatially flat universe with a perfect radiation fluid
for which the energy density $\rho\propto1/a^{4}$ and is consistent
with current observations \citep{Ade:2015xua}. To derive the geodesics
we use Cartesian coordinates for the metric (\ref{eq:flrw metric})
\begin{equation}
ds^{2}=-dt^{2}+a(t)^{2}\left(dx^{i}\right)^{2}.
\end{equation}
Let a geodesic be given by $\gamma(\tau)=\left(t(\tau),x^{i}(\tau)\right)$
and let $u^{\mu}=d\gamma^{\mu}/d\tau$. The geodesic equations are
given by
\begin{eqnarray}
\frac{du^{0}}{d\tau}+a\dot{a}\left(u^{i}\right)^{2} & = & 0\nonumber \\
\frac{du^{i}}{d\tau}+2\frac{\dot{a}}{a}u^{0}u^{i} & = & 0.
\end{eqnarray}
The second equation can be rewritten as
\begin{equation}
\frac{d}{d\tau}\left[a^{2}u^{i}\right]=0
\end{equation}
with solution 
\begin{equation}
u^{i}=\frac{C_{i}}{a^{2}},\label{eq:derivative normal coordinates to tau}
\end{equation}
where $C_{i}$ are constants. The constraint equation is
\begin{equation}
\epsilon_{\mathrm{n}}=-\left(u^{0}\right)^{2}+a^{2}\left(u^{i}\right)^{2}
\end{equation}
and leads to 
\begin{equation}
u^{0}=\frac{\sqrt{C-\epsilon_{\mathrm{n}}a^{2}}}{a},\label{eq:time derived to tau}
\end{equation}
(Eq. (\ref{eq:time component geodesic})) where $C=\sum_{i}C_{i}^{2}$.
Let us now consider timelike geodesics, $\epsilon_{\mathrm{n}}=-1$
and choose $C=1$. We can then solve Eq. (\ref{eq:time derived to tau})
for $a(t)=\sqrt{t}$ by
\begin{equation}
\sqrt{t+t^{2}}-\sinh^{-1}\left(\sqrt{t}\right)=\tau,\label{eq:time in terms of tau}
\end{equation}
where we chose $\tau$ such that $\tau=0$ at the singularity. From
Eqs. (\ref{eq:derivative normal coordinates to tau}) and (\ref{eq:time derived to tau})
we find that 
\begin{equation}
\frac{dx^{i}}{dt}=\frac{1}{\sqrt{1+a^{2}}}\frac{C_{i}}{a},
\end{equation}
which is solved by
\begin{eqnarray}
x^{i} & = & 2C_{i}\sinh^{-1}\left(\sqrt{t}\right)+D_{i},\label{eq:solution geodesic eq}
\end{eqnarray}
where $D_{i}$ are constants (notice that we have the restriction
$1=\sum_{i}C_{i}^{2}$). We consider the geodesic
\begin{equation}
\gamma=\left(t,2\sinh^{-1}\left(\sqrt{t}\right),0,0\right)
\end{equation}
and we want to examine conjugate points along this geodesic. We now
construct the matrix $A$ of Eq. (\ref{eq:jacobi tensor related to point})
corresponding to the point $\gamma(t_{2})$ for this geodesic. An
orthonormal basis that is parallel transported along this geodesic
is given by 
\begin{eqnarray}
E_{0} & = & \left(\sqrt{\frac{1+t}{t}},\frac{1}{t},0,0\right)\nonumber \\
E_{1} & = & \left(\frac{1}{\sqrt{t}},\frac{\sqrt{1+t}}{t},0,0\right)\nonumber \\
E_{2} & = & \left(0,0,\frac{1}{\sqrt{t}},0\right)\nonumber \\
E_{3} & = & \left(0,0,0,\frac{1}{\sqrt{t}}\right).
\end{eqnarray}
We now need the Jacobi fields $J_{i}$ for $i\in\{1,2,3\}$ such that
$J_{i}(t_{2})=0$ and $D_{\tau}J_{i}(t_{2})=E_{i}(t_{2})$. The differential
equations (\ref{eq:Jacobi equation}) for the first 2 components of
the Jacobi fields only depend on each other. The differential equation
for $J_{i}^{k}$, $k\in\{2,3\}$ is given by
\begin{equation}
\frac{(1+2t)\left(J_{i}^{k}\right)'+2t(1+t)\left(J_{i}^{k}\right)''}{t}=0.\label{eq:jacobi field diff eq}
\end{equation}
This implies that 
\begin{eqnarray}
J_{1} & = & \left(h_{1}(t),h_{2}(t),0,0\right)\nonumber \\
J_{2} & = & \left(0,0,h_{3}(t),0\right)\label{eq:jacobi fields cartesian coordinates}\\
J_{3} & = & \left(0,0,0,h_{3}(t)\right).\nonumber 
\end{eqnarray}
We can solve Eq. (\ref{eq:jacobi field diff eq}) for $h_{3}$ explicitly
and find 
\begin{equation}
h_{3}(t)=-2\sqrt{t_{2}}\left(\sinh^{-1}\left(\sqrt{t_{2}}\right)-\sinh^{-1}\left(\sqrt{t}\right)\right).\label{eq:jacobi field 3}
\end{equation}
We have to solve for $h_{1}$ and $h_{2}$ numerically. The matrix
$A$ is then given by
\begin{equation}
A=\left(\begin{array}{ccc}
-\frac{1}{\sqrt{t}}h_{1}+\sqrt{1+t}h_{2} & 0 & 0\\
0 & \sqrt{t}h_{3} & 0\\
0 & 0 & \sqrt{t}h_{3}
\end{array}\right),
\end{equation}
which has determinant
\begin{equation}
\det A=t\left(-\frac{1}{\sqrt{t}}h_{1}+\sqrt{1+t}h_{2}\right)h_{3}^{2}.
\end{equation}
Notice that at $t=0$, $\sqrt{t}h_{3}(t)=0$, which naively would
mean that $\gamma(t_{2})$ is conjugate to the point at the singularity.
However, in other coordinate systems the Jacobi field does not vanish
(see Eq. (\ref{eq:jacobi field 3})). This behavior is caused by the
degeneracy of the metric at the singularity. The norm of the Jacobi
field however, is zero in both coordinate systems.

We use this example as a motivation to generalize the definition of
a conjugate point to include points where the metric is degenerate.
From a physical point of view it is the norm of the Jacobi field that
matters since this corresponds to the distance between particles moving
on nearby geodesics. That is why we will also say that we have conjugate
points on a timelike geodesic when the norm of the Jacobi field vanishes.
Such a Jacobi field should still be perpendicular to the geodesic
(otherwise one could just get that it is a null vector). As long as
the metric is non-degenerate this definition is the same as our original
definition. Notice that the vanishing of the determinant of the matrix
$A$ is equivalent to a Jacobi field $J$ perpendicular to $\dot{\gamma}$
and such that $g(E_{i},J)=0$ for all $i$. From 
\begin{equation}
g(J,J)=\sum_{i}g(E_{i},J)^{2}
\end{equation}
we conclude that $g(E_{i},J)=0$ for all $i$ is equivalent to $g(J,J)=0.$ 

With this new definition two points on a geodesic can be conjugate
in two different ways. The first one is that geodesics are indeed
converging to one point (to first order), the second one is that that
does not happen, but that the norm of the Jacobi field vanishes. We
found that the vanishing of the determinant of $A$ is equivalent
to this new definition if the Jacobi field is perpendicular to the
geodesic. In the same way one can give a generalized definition of
conjugate points for null geodesics.

\subsection{The Theorem\label{subsec:The-Theorem}}

We will now prove the theorem that states that when a certain non-comoving,
non-spacelike geodesic satisfies conditions (\ref{eq:condition conjugate point at singularity})
and (\ref{eq:condition conjugate point at singularity 2}), every
point on that geodesic is part of a pair of conjugate points. Here
we do not know whether geodesics actually converge to that point.
To prove this for a point $\gamma(t_{2})$ on a timelike geodesic,
the idea is to use Eqs. (\ref{eq:expansion in A}) and (\ref{eq:generalized Raychaudhuri eq timelike})
to derive an inequality for $\log\left(\det A(t)\right).$ From this
inequality we show that $\log\left(\det A(t)\right)$ goes to $-\infty$
at a point $\gamma(t')$ that lies in between the singularity and
$\gamma(t_{2}).$ This means that $\det A(t')=0$ which implies that
$\gamma(t')$ is conjugate to $\gamma(t_{2}).$ For null-geodesics
we use the same strategy using Eqs. (\ref{eq:expansion in A null})
and (\ref{eq:raychaudhuri eq null geodesic}).

. 
\begin{thm*}
\label{proof conjugate point using raychaudhuri}Let $\gamma(\tau(t))$
be a non-comoving ($C>0$), non-spacelike geodesic in a spacetime
with FLRW metric such that $a(t_{0})=0$ for a certain $t_{0}$ and
$a$ is smooth for $t>t_{0}$. Let 
\begin{equation}
f(t)=3\overset{..}{a}+2\frac{C}{a}\left[\frac{\overset{..}{a}}{a}-\frac{\dot{a}^{2}}{a^{2}}-\frac{\kappa}{a^{2}}\right]
\end{equation}
and define
\begin{eqnarray}
f_{+}(t) & = & \begin{cases}
f(t) & \;\mbox{for t where }f(t)\geq0\\
0 & \;\mbox{for t where }f(t)<0
\end{cases}\\
f_{-}(t) & = & \begin{cases}
-f(t) & \;\mbox{for t where }f(t)\leq0\\
0 & \;\mbox{for t where }f(t)>0.
\end{cases}\nonumber 
\end{eqnarray}
A point $\gamma(\tau(t_{2}))$ for $t_{2}\neq t_{0}$ is conjugate
to a point $\gamma(\tau(t'))$ where $t_{0}\leq t'<t_{2}$ if the
following conditions are satisfied: 
\begin{eqnarray}
\lim_{t\rightarrow t_{0}}\int_{t}^{t_{1}}a(t')\int_{t'}^{t_{1}}\frac{1}{a}\left[\frac{\overset{..}{a}}{a}-\frac{\dot{a}^{2}}{a^{2}}-\frac{\kappa}{a^{2}}\right]dt''dt' & = & -\infty\label{eq:condition conjugate point at singularity}\\
\lim_{t\rightarrow t_{0}}\int_{t}^{t_{1}}f_{+}dt' & = & \alpha\in\mathbb{R}_{\geq0}\label{eq:condition conjugate point at singularity 2}
\end{eqnarray}
for a $t_{1}>t_{0}.$ 
\end{thm*}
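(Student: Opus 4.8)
The plan is to work with the logarithm of $\det A$ and show it diverges to $-\infty$ at some $t'$ strictly between the singularity $t_0$ and the target point $t_2$. The strategy the introduction already sketches is to combine the two facts we are handed: Eq.~(\ref{eq:expansion in A}), which says $\theta_A = \partial_\tau \log(\det A)$, and the Raychaudhuri equation~(\ref{eq:generalized Raychaudhuri eq timelike}). Since $\mathrm{tr}(\sigma_A^2)\geq 0$ and $\theta_A^2/3 \geq 0$, the Raychaudhuri equation immediately gives the differential inequality
\begin{equation}
\dot{\theta}_A \leq -\mathrm{Ric}(\dot\gamma,\dot\gamma),
\end{equation}
and dropping the shear term is exactly the standard focusing trick. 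The first concrete task is therefore to compute $\mathrm{Ric}(\dot\gamma,\dot\gamma)$ explicitly for the FLRW metric along a geodesic, using the first integrals~(\ref{eq:derivative normal coordinates to tau}) and~(\ref{eq:time derived to tau}) to eliminate $u^0,u^i$ in favour of $a$, $C$ and $\epsilon_{\mathrm n}$. I expect this to reproduce precisely the combination appearing in the definition of $f(t)$, so that $-\mathrm{Ric}(\dot\gamma,\dot\gamma)$ integrated against the affine parameter is what the conditions~(\ref{eq:condition conjugate point at singularity}) and~(\ref{eq:condition conjugate point at singularity 2}) are built to control.

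Next I would integrate the inequality twice. Integrating $\dot\theta_A \leq -\mathrm{Ric}(\dot\gamma,\dot\gamma)$ once in $\tau$ from some reference point down toward the singularity bounds $\theta_A$, and integrating a second time bounds $\log(\det A)$. The subtlety is the change of variables between the affine parameter $\tau$ and coordinate time $t$: from Eq.~(\ref{eq:time derived to tau}) we have $d\tau = a\,dt/\sqrt{C-\epsilon_{\mathrm n}a^2}$, and the factor $\sqrt{C-\epsilon_{\mathrm n}a^2}$ is bounded and nonzero near $t_0$ for a non-comoving geodesic, so the weight is controlled by a single power of $a$. The nested integral $\int a\, dt \int (1/a)[\ddot a/a - \dot a^2/a^2 - \kappa/a^2]\,dt$ in condition~(\ref{eq:condition conjugate point at singularity}) is exactly what survives from the double integration of the $2(C/a)[\cdots]$ part of $f$. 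The split into $f_+$ and $f_-$ is the clean way to handle that the integrand need not have a fixed sign: condition~(\ref{eq:condition conjugate point at singularity 2}) guarantees the positive part contributes only a finite amount $\alpha$, while condition~(\ref{eq:condition conjugate point at singularity}) forces the (negative) contribution to diverge, so the sum $\log(\det A)$ is driven to $-\infty$.

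The logical structure I would then assemble is: suppose for contradiction that $\det A \neq 0$ on the whole half-open interval $(t_0, t_2]$, so that $\theta_A$ and $\log(\det A)$ are well defined there. The focusing inequality plus the two hypotheses show $\log(\det A(t)) \to -\infty$ as $t$ decreases, meaning $\det A \to 0^+$. Because $\det A(t_2)$ is finite (it equals zero only if $t_2$ is itself conjugate, the degenerate case one argues separately) and $\det A$ is continuous, there must be a first $t' \in [t_0, t_2)$ where $\det A(t') = 0$, and by the generalized definition of conjugate points established in the previous subsection this $t'$ is conjugate to $t_2$. I would present the timelike case in full and then remark that the null case is identical after replacing~(\ref{eq:expansion in A}),~(\ref{eq:generalized Raychaudhuri eq timelike}) by their barred counterparts~(\ref{eq:expansion in A null}),~(\ref{eq:raychaudhuri eq null geodesic}), the only change being the $\theta^2/2$ versus $\theta^2/3$ coefficient, which is harmless since that term is dropped in the inequality anyway.

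The main obstacle I anticipate is making the two integrations rigorous as $t \to t_0$, where $a \to 0$ and several quantities blow up. One must be careful that $\theta_A$ does not itself already diverge to $+\infty$ fast enough to rescue $\det A$ from vanishing, i.e. one needs the double integral (not just the single integral of the Ricci term) to diverge, which is precisely why the hypothesis~(\ref{eq:condition conjugate point at singularity}) is stated as a nested integral rather than a single one. Handling the boundary term from the first integration of $\theta_A$ — ensuring it stays finite or is absorbed by condition~(\ref{eq:condition conjugate point at singularity 2}) — and confirming that the $3\ddot a$ piece of $f$ contributes only to the finitely-bounded $f_\pm$ integrals rather than to the divergent nested integral, is where the delicate estimation lives.
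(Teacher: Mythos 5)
Your proposal matches the paper's proof essentially step for step: drop the shear and $\theta_A^2$ terms in the Raychaudhuri equation, convert $\tau$-derivatives to $t$-derivatives via $d\tau = a\,dt/\sqrt{C-\epsilon_{\mathrm n}a^2}$ (whose bounded, nonvanishing denominator is replaced by constants on the appropriate side of each inequality), integrate twice, and use condition (\ref{eq:condition conjugate point at singularity 2}) to keep the nested $f_+$ integral finite while condition (\ref{eq:condition conjugate point at singularity}) forces the nested $f_-$ integral, and hence $\log(\det A)$, to $-\infty$. The one small inaccuracy is your remark that the null case is ``identical'' up to the $\theta^2/2$ coefficient: for a null geodesic $\mathrm{Ric}(\dot\gamma,\dot\gamma)$ loses the $3\ddot{a}/a$ piece, so no $f_\pm$ splitting is needed and condition (\ref{eq:condition conjugate point at singularity}) alone suffices --- which only makes that half of the argument easier.
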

\begin{proof}
We will prove this separately for timelike and null geodesics. Let
$\gamma$ be a timelike geodesic. Let $\gamma(\tau(t_{2}))$ be a
point on this geodesic and let $A$ denote the Jacobi tensor field
as defined in (\ref{eq:jacobi tensor related to point}). To show
that $\gamma(\tau(t_{2}))$ is conjugate to a point $\gamma(\tau(t'))$
with $t_{0}\leq t'<t$, we will show that $\log\left(\det A\right)$
has to go to $-\infty$ at some point $\gamma(t').$ 

Consider Eq. (\ref{eq:time component geodesic}) and the Raychaudhuri
equation, Eq. (\ref{eq:generalized Raychaudhuri eq timelike}). Since
$\sigma_{A}$ is symmetric we have that $\mathrm{tr}\left(\sigma_{A}^{2}\right)$
is positive such that
\begin{equation}
\frac{d\theta_{A}}{dt}=\frac{d\tau}{dt}\frac{d\theta_{A}}{d\tau}\leq\frac{-a}{\sqrt{C+a^{2}}}\mathrm{Ric}(\dot{\gamma}(\tau),\dot{\gamma}(\tau)).
\end{equation}
For the FLRW metric we then find that
\begin{equation}
-\mathrm{Ric}(\dot{\gamma}(\tau),\dot{\gamma}(\tau))=3\frac{\overset{..}{a}}{a}+2\frac{C}{a^{2}}\left[\frac{\overset{..}{a}}{a}-\frac{\dot{a}^{2}}{a^{2}}-\frac{\kappa}{a^{2}}\right],
\end{equation}
which results in 
\begin{eqnarray}
\frac{d\theta_{A}}{dt} & \leq & \frac{1}{\sqrt{C+a^{2}}}\left(f_{+}-f_{-}\right).\label{eq:raychaudhuri eq flrw metric}
\end{eqnarray}
Notice that in conditions (\ref{eq:condition conjugate point at singularity})
and (\ref{eq:condition conjugate point at singularity 2}) we can
assume that $t_{1}<t_{2}$ and that $\gamma(t_{1})$ is not conjugate
to $\gamma(t_{2}).$ We find from condition (\ref{eq:condition conjugate point at singularity})
that: 
\begin{equation}
\lim_{t\rightarrow t_{0}}\int_{t}^{t_{1}}a\int_{t'}^{t_{1}}f_{+}dt''dt'-\lim_{t\rightarrow t_{0}}\int_{t}^{t_{1}}a\int_{t'}^{t_{1}}f_{-}dt''dt'=-\infty.\label{eq:positive and negative parts}
\end{equation}
From condition (\ref{eq:condition conjugate point at singularity 2})
it follows that
\begin{equation}
\int_{t}^{t_{1}}f_{+}dt'
\end{equation}
is a function that is $\alpha$ at $t_{0},$ $0$ at $t_{1}$ and
strictly decreasing. Hence
\begin{equation}
a(t)\int_{t}^{t_{1}}f_{+}dt'
\end{equation}
 is vanishing at $t_{0}$ and $t_{1}$ and continuous and positive
in between. This implies that
\begin{equation}
\lim_{t\rightarrow t_{0}}\int_{t}^{t_{1}}a\int_{t'}^{t_{1}}f_{+}dt''dt'=\beta\in\mathbb{R}_{\geq0}\label{eq:limits pos en neg 2}
\end{equation}
 and together with Eq. (\ref{eq:positive and negative parts}) this
gives 
\begin{eqnarray}
\lim_{t\rightarrow t_{0}}\int_{t}^{t_{1}}a\int_{t'}^{t_{1}}f_{-}dt''dt' & = & \infty.\label{eq:limits pos and neg}
\end{eqnarray}
For $t<t_{1}$ we find with Eq. (\ref{eq:raychaudhuri eq flrw metric})
that 
\begin{eqnarray}
\theta_{A}(t) & = & -\int_{t}^{t_{\text{1}}}\frac{d\theta_{A}}{dt}dt'+\theta_{A}(t_{1})\nonumber \\
 & \geq & -\frac{1}{\sqrt{C}}\int_{t}^{t_{1}}f_{+}dt'+\frac{1}{\sqrt{C+a_{\mathrm{max}}^{2}}}\int_{t}^{t_{1}}f_{-}dt'+\theta_{A}(t_{1}),
\end{eqnarray}
where $a_{\mathrm{max}}=\max\{a(t)|t_{0}\leq t\leq t_{1}\}.$ Then
using Eq. (\ref{eq:expansion in A}): 
\begin{eqnarray}
\log\left(\det A(t)\right) & = & -\int_{t}^{t_{1}}\frac{a}{\sqrt{C+a^{2}}}\theta_{A}dt'+\log\left(\det A(t_{1})\right)\nonumber \\
 & \leq & \frac{1}{C}\int_{t}^{t_{1}}a\int_{t'}^{t_{1}}f_{+}dt''dt'-\frac{1}{C+a_{\mathrm{max}}^{2}}\int_{t}^{t_{1}}a\int_{t'}^{t_{1}}f_{-}dt''dt'\nonumber \\
 &  & -\theta_{A}(t_{1})\int_{t}^{t_{1}}\frac{a}{\sqrt{C+a^{2}}}dt'+\log\left(\det A(t_{1})\right).\label{eq:expansion parameter intermediate}
\end{eqnarray}
With Eqs. (\ref{eq:limits pos en neg 2}) and (\ref{eq:limits pos and neg})
it then follows that the right-hand side of Eq. (\ref{eq:expansion parameter intermediate})
goes to $-\infty$ in the limit $t\rightarrow t_{0}$. That means
that $\gamma(\tau(t_{2}))$ is conjugate to a point $\gamma(\tau(t'))$
with $t_{0}\leq t'<t_{1}$. 

Consider now a null geodesic $\gamma$ and let $\gamma(\tau(t_{2}))$
be a point on this geodesic. The Raychaudhuri equation, Eq. (\ref{eq:raychaudhuri eq null geodesic}),
reads:
\begin{eqnarray}
\frac{d\bar{\theta}_{\bar{A}}}{dt} & = & \frac{a}{\sqrt{C}}\left(2\frac{C}{a^{2}}\left[\frac{\overset{..}{a}}{a}-\frac{\dot{a}^{2}}{a^{2}}-\frac{\kappa}{a^{2}}\right]-\mathrm{tr}\left(\bar{\sigma}_{\bar{A}}^{2}\right)-\frac{\bar{\theta}_{\bar{A}}^{2}}{2}\right)\nonumber \\
 & \leq & 2\frac{\sqrt{C}}{a}\left[\frac{\overset{..}{a}}{a}-\frac{\dot{a}^{2}}{a^{2}}-\frac{\kappa}{a^{2}}\right].
\end{eqnarray}
We can again assume that $t_{1}<t_{2}$ and that $\gamma(t_{1})$
is not conjugate to $\gamma(t_{2})$. It then follows that for $t<t_{1}$
\begin{eqnarray}
\bar{\theta}_{\bar{A}}(t) & = & -\int_{t}^{t_{1}}\frac{d\bar{\theta}_{\bar{A}}}{dt}dt'+\bar{\theta}_{\bar{A}}(t_{1})\nonumber \\
 & \geq & -2\sqrt{C}\int_{t}^{t_{1}}\frac{1}{a}\left[\frac{\overset{..}{a}}{a}-\frac{\dot{a}^{2}}{a^{2}}-\frac{\kappa}{a^{2}}\right]dt'+\bar{\theta}_{\bar{A}}(t_{1}).
\end{eqnarray}
This implies that
\begin{eqnarray}
\log\left(\det\bar{A}(t)\right) & = & -\int_{t}^{t_{1}}\frac{a}{\sqrt{C}}\bar{\theta}_{\bar{A}}dt'+\log\left(\det\bar{A}(t_{1})\right)\nonumber \\
 & \leq & 2\int_{t}^{t_{1}}a\int_{t'}^{t_{1}}\frac{1}{a}\left[\frac{\overset{..}{a}}{a}-\frac{\dot{a}^{2}}{a^{2}}-\frac{\kappa}{a^{2}}\right]dt''dt'-\bar{\theta}_{\bar{A}}(t_{1})\frac{1}{\sqrt{C}}\int_{t}^{t_{1}}adt'+\log\left(\det\bar{A}(t_{1})\right),\label{eq:log a bar}
\end{eqnarray}
where we have used Eq. (\ref{eq:expansion in A null}). Condition
(\ref{eq:condition conjugate point at singularity}) then implies
that the right-hand side of Eq. (\ref{eq:log a bar}) goes to $-\infty$
in the limit $t\rightarrow t_{0}.$ Hence $\gamma(\tau(t_{2}))$ is
conjugate to a point $\gamma(\tau(t'))$ where $t_{0}\leq t'<t_{1}$. 
\end{proof}
Notice that one can rewrite condition (\ref{eq:condition conjugate point at singularity})
by partially integrating the first term such that one obtains
\begin{eqnarray}
-\infty & = & \lim_{t\rightarrow t_{0}}\int_{t}^{t_{1}}a(t')\int_{t'}^{t_{1}}\frac{1}{a}\left[\frac{d}{dt}\frac{\dot{a}}{a}-\frac{\kappa}{a^{2}}\right]dt''dt'\nonumber \\
 & = & \frac{\dot{a}(t_{1})}{a^{2}(t_{1})}\int_{t_{0}}^{t_{1}}adt'-\log(a(t_{1}))+\lim_{t\rightarrow t_{0}}\log(a(t))+\lim_{t\rightarrow t_{0}}\int_{t}^{t_{1}}a(t')\int_{t'}^{t_{1}}\frac{\dot{a}^{2}-\kappa}{a^{3}}dt''dt'.
\end{eqnarray}
Thus condition (\ref{eq:condition conjugate point at singularity})
is definitely satisfied when 
\begin{equation}
\lim_{t\rightarrow t_{0}}\int_{t}^{t_{1}}a\int_{t'}^{t_{1}}\frac{\dot{a}^{2}-\kappa}{a^{3}}dt''dt'
\end{equation}
is not $\infty.$ 

Also condition (\ref{eq:condition conjugate point at singularity 2})
is satisfied as soon as $f$ is negative for $t\in(t_{0},t_{0}+\delta),$
$\delta\ll1$.

The theorem can be proven under different conditions. One set of such
conditions would be for instance
\begin{eqnarray}
\lim_{t\rightarrow t_{0}}\int_{t}^{t_{1}}\frac{a}{\sqrt{C+a^{2}}}\int_{t'}^{t_{1}}\frac{1}{\sqrt{C+a^{2}}}\left(3\overset{..}{a}+2\frac{C}{a}\left[\frac{\overset{..}{a}}{a}-\frac{\dot{a}^{2}}{a^{2}}-\frac{\kappa}{a^{2}}\right]\right)dt''dt' & = & -\infty;\\
\lim_{t\rightarrow t_{0}}\int_{t}^{t_{1}}a\int_{t'}^{t_{1}}\frac{1}{a}\left[\frac{\overset{..}{a}}{a}-\frac{\dot{a}^{2}}{a^{2}}-\frac{\kappa}{a^{2}}\right]dt''dt' & = & -\infty\nonumber 
\end{eqnarray}
which would have made the proof really easy.

\subsection{Relation to Physical Spacetimes}

The theorem is applicable to FLRW spacetimes with physically realistic
scale factors. Notice that the conditions of the theorem only depend
on the form of the scale factor $a(t)$ near the singularity. We will
assume it there to take one of the forms
\begin{eqnarray}
a(t) & = & t^{1/\epsilon}\label{eq:power law}\\
a(t) & = & -t^{1/\epsilon}\log(t),\label{eq:one loop corrected}
\end{eqnarray}
with $\epsilon>0$. Notice that in case of power law behavior (\ref{eq:power law}),
\begin{equation}
\epsilon=-\frac{\dot{H}}{H^{2}}
\end{equation}
is the principal slow roll parameter ($H=\dot{a}/a$ is the Hubble
parameter). When $0<\epsilon\ll1$ the scale factor (\ref{eq:power law})
corresponds to inflation, $\epsilon=3/2$ gives the scale factor of
an FLRW spacetime with $\kappa=0$ containing a perfect homogeneous
matter fluid and $\epsilon=2$ gives the scale factor of an FLRW spacetime
with $\kappa=0$ containing a perfect homogeneous radiation fluid.
The second form (\ref{eq:one loop corrected}) of the scale factor
is related to one loop corrections. When matter is integrated out
the effective action contains, up to boundary terms, terms $R^{2}\log(R/\mu^{2})$
and $W^{2}\log(R/\mu^{2}),$ where $W$ is the Weyl tensor and $\mu$
is an energy scale \citep{Zeldovich:1971mw,Gurovich:1979xg,Barvinsky:1987uw,Barvinsky:1990up}.
This motivates to examine scale factors that have more complicated
behavior near the singularity and that is why we also study the logarithmic
behavior (\ref{eq:one loop corrected}). Notice however that \citep{Zeldovich:1971mw,Gurovich:1979xg}
focussed mostly on anisotropic expansions which actually help to resolve
the singularity. We will however still examine the form (\ref{eq:one loop corrected})
since in the end it just serves as an example to what kind of scale
factors the theorem can be applied. 

We will consider both of the scale factors (\ref{eq:power law}) and
(\ref{eq:one loop corrected}) separately, starting with the power
law behavior. We find that for $\epsilon\notin\{1,3\}$

\begin{eqnarray}
\int_{t}^{t_{1}}a\int_{t'}^{t_{1}}\frac{1}{a}\left[\frac{\overset{..}{a}}{a}-\frac{\dot{a}^{2}}{a^{2}}-\frac{\kappa}{a^{2}}\right]dt''dt' & = & \frac{\alpha\epsilon}{1+\epsilon}\left(t_{1}^{1/\epsilon+1}-t{}^{1/\epsilon+1}\right)-\frac{1}{1+\epsilon}\log\left(\frac{t_{1}}{t}\right)+\left[\frac{\epsilon}{\epsilon-3}\frac{\epsilon}{2\epsilon-2}\frac{\kappa}{t^{2/\epsilon-2}}\right]_{t}^{t_{1}},\nonumber \\
\end{eqnarray}
where 
\begin{equation}
\alpha=\frac{1}{1+\epsilon}\frac{1}{t_{1}^{1/\epsilon+1}}-\frac{\epsilon}{\epsilon-3}\frac{\kappa}{t_{1}^{3/\epsilon-1}}.
\end{equation}
This scale factor obeys condition (\ref{eq:condition conjugate point at singularity})
when
\begin{itemize}
\item $0<\epsilon<1$ and $\kappa>0$;
\item $\epsilon>1$ or $\kappa=0$.
\end{itemize}
Similarly, one can show that condition (\ref{eq:condition conjugate point at singularity})
is satisfied for $\epsilon=3.$ For $\epsilon=1,$ condition (\ref{eq:condition conjugate point at singularity})
is only satisfied for $\kappa>-1.$

Also 
\begin{equation}
f(t)=3\frac{1}{\epsilon}(\frac{1}{\epsilon}-1)\frac{1}{t^{2-1/\epsilon}}-2C\left(\frac{1}{\epsilon}\frac{1}{t^{2+1/\epsilon}}+\frac{\kappa}{t^{3/\epsilon}}\right)
\end{equation}
which goes to $-\infty$ in the limit $t\rightarrow0$ for
\begin{itemize}
\item $0<\epsilon<1$ and $\kappa>0$;
\item $\epsilon>1$ or $\kappa=0$
\end{itemize}
and that implies that condition (\ref{eq:condition conjugate point at singularity 2})
is satisfied. When $\epsilon=1$ and $\kappa>-1$, $\lim_{t\rightarrow0}f(t)=-\infty$
such that condition (\ref{eq:condition conjugate point at singularity 2})
is satisfied.

Concluding, we can apply the theorem to all non-comoving, non-spacelike
geodesics in FLRW spacetimes with a scale factor with power law behavior
(\ref{eq:power law}) in the cases
\begin{itemize}
\item $0<\epsilon<1$ and $\kappa>0$;
\item $\epsilon>1$ or $\kappa=0$;
\item $\epsilon=1$ and $\kappa>-1.$
\end{itemize}
We will now focus on scale factors of the form (\ref{eq:one loop corrected}).
We find that
\begin{equation}
\frac{\overset{..}{a}}{a}-\frac{\dot{a}^{2}}{a^{2}}-\frac{\kappa}{a^{2}}=\frac{-\frac{1}{\epsilon}\left(\log t\right)^{2}-\log t-1}{t^{2}\left(\log t\right)^{2}}-\frac{\kappa}{t^{2/\epsilon}\left(\log t\right)^{2}}.\label{eq:expression1 log(T)}
\end{equation}
Since condition (\ref{eq:condition conjugate point at singularity})
only depends on the behavior in the limit $t\rightarrow0,$ we only
have to consider the dominating term of expression (\ref{eq:expression1 log(T)}).
In this limit 
\begin{equation}
\frac{\overset{..}{a}}{a}-\frac{\dot{a}^{2}}{a^{2}}-\frac{\kappa}{a^{2}}\rightarrow\begin{cases}
-\frac{\kappa}{t^{2/\epsilon}\left(\log t\right)^{2}} & 0<\epsilon<1\mbox{ and \ensuremath{\kappa\neq0}}\\
-\frac{1}{\epsilon t^{2}} & \epsilon\geq1\mbox{ or \ensuremath{\kappa=0.}}
\end{cases}
\end{equation}
Hence for $\epsilon\geq1$ or $\kappa=0$ we find that 
\begin{eqnarray}
\int_{t}^{t_{1}}a(t')\int_{t'}^{t_{1}}\frac{1}{a}\left[\frac{\overset{..}{a}}{a}-\frac{\dot{a}^{2}}{a^{2}}-\frac{\kappa}{a^{2}}\right]dt''dt' & \rightarrow & -\int_{t}^{t_{1}}t^{1/\epsilon}\log(t)\int_{t'}^{t_{1}}\left[\frac{1}{\epsilon}\frac{1}{t^{2+1/\epsilon}\log(t)}\right]dt''dt'\\
 & \rightarrow & -\infty.\nonumber 
\end{eqnarray}
For $0<\epsilon<1$ and $\kappa\neq0$ we find
\begin{equation}
\int_{t}^{t_{1}}a(t')\int_{t'}^{t_{1}}\frac{1}{a}\left[\frac{\overset{..}{a}}{a}-\frac{\dot{a}^{2}}{a^{2}}-\frac{\kappa}{a^{2}}\right]dt''dt'\rightarrow-\int_{t}^{t_{1}}t^{1/\epsilon}\log(t)\int_{t'}^{t_{1}}\left[\frac{\kappa}{t^{3/\epsilon}\left(\log t\right)^{3}}\right]dt''dt'
\end{equation}
which for $t\rightarrow0$ goes to $\infty$ when $\kappa<0$ and
goes to $-\infty$ when $\kappa>0.$ We conclude that condition (\ref{eq:condition conjugate point at singularity})
is obeyed in the cases 
\begin{itemize}
\item $0<\epsilon<1$ and $\kappa>0$;
\item $\epsilon\geq1$ or $\kappa=0.$
\end{itemize}
We consider now condition (\ref{eq:condition conjugate point at singularity 2}).
We have that
\begin{equation}
f(t)=-3\left(\frac{1}{\epsilon}(\frac{1}{\epsilon}-1)\log(t)+(\frac{2}{\epsilon}-1)\right)t^{1/\epsilon-2}+2\frac{C}{t^{1/\epsilon}\log(t)}\left[\frac{\frac{1}{\epsilon}\left(\log t\right)^{2}+\log t+1}{t^{2}\left(\log t\right)^{2}}+\frac{\kappa}{t^{2/\epsilon}\left(\log t\right)^{2}}\right]
\end{equation}
which goes for $t\rightarrow0$ to
\begin{equation}
f(t)\rightarrow\begin{cases}
\frac{2}{\epsilon}C\frac{1}{t^{2+1/\epsilon}\log t} & \mbox{ }\epsilon\geq1\mbox{ or \ensuremath{\kappa=0}};\\
2C\frac{\kappa}{t^{3/\epsilon}\left(\log t\right)^{3}} & \mbox{ }\mbox{0<\ensuremath{\epsilon<1\mbox{ and \ensuremath{\kappa\neq0}}.}}
\end{cases}
\end{equation}
Hence we find that $f\rightarrow-\infty$ such that condition (\ref{eq:condition conjugate point at singularity 2})
is obeyed in the cases
\begin{itemize}
\item $0<\epsilon<1$ and $\kappa>0;$
\item $\epsilon\geq1$ or $\kappa=0$,
\end{itemize}
which implies that the theorem is applicable to all non-comoving,
non-spacelike geodesics in exactly these cases.

\section{Conclusion}

We studied the connection between the occurrence of conjugate points
on geodesics and the existence of singularities in spacetimes with
an FLRW metric. In particular we proved that in a singular FLRW spacetime,
every point on a non-comoving, non-spacelike geodesic is part of a
pair of conjugate points if the geodesic satisfies conditions (\ref{eq:condition conjugate point at singularity})
and (\ref{eq:condition conjugate point at singularity 2}). To do
that we generalized the definition of conjugate points to include
points of the metric where it is degenerate. In the proof of the theorem
we extensively used the Raychaudhuri equation. We also showed that
the theorem is applicable to all non-comoving, non-spacelike geodesics
in FLRW spacetimes with a scale factor of the form 
\begin{eqnarray}
a(t) & = & t^{1/\epsilon}\label{eq:power law 2}
\end{eqnarray}
in the cases
\begin{itemize}
\item $0<\epsilon<1$ and $\kappa>0$;
\item $\epsilon>1$ or $\kappa=0$;
\item $\epsilon=1$ and $\kappa>-1,$ 
\end{itemize}
and for a scale factor of the form 
\begin{equation}
a(t)=-t^{1/\epsilon}\log(t)\label{eq:one loop 2}
\end{equation}
 in the cases
\begin{itemize}
\item $0<\epsilon<1$ and $\kappa>0;$
\item $\epsilon\geq1$ or $\kappa=0$.
\end{itemize}
The parameter $\epsilon$ is the principal slow roll parameter for
the form (\ref{eq:power law 2}) and $\kappa$ is the curvature of
spatial three-surfaces. Since the conditions of the theorem only depend
on the behavior of a scale factor near the singularity, we find that
for FLRW spacetimes that belong to one of these cases near the singularity,
every point on a non-comoving, non-spacelike geodesic belongs to a
pair of conjugate points. This includes in particular an FLRW spacetime
with flat spatial three-surfaces that contains either a perfect homogeneous
radiation fluid or a perfect homogeneous matter fluid. 

It would be of interest to examine the connection between conjugate
points and singularities further in FLRW spacetimes. One can also
study this connection for other metrics such as singular anisotropic
spacetimes and spacetimes containing a black hole.

\section*{Acknowledgments}

H. L. likes to thank Gil Cavalcanti for useful discussions about this
topic. This work was supported in part by the D-ITP consortium, a
program of the Netherlands Organization for Scientific Research (NWO)
that is funded by the Dutch Ministry of Education, Culture and Science
(OCW), and by the NWO Graduate Programme.

\bibliographystyle{elsarticle-num}
\bibliography{paper2}

\end{document}